\documentclass{amsart}

\usepackage[T1]{fontenc}
\usepackage[latin9]{inputenc}
\usepackage{color}
\pdfoutput=1
\usepackage{amsthm} 
\usepackage{amssymb}
\usepackage{graphicx}


\usepackage[english]{babel}

\usepackage[authoryear,round]{natbib}

\newtheorem{theorem}{Theorem}

\newtheorem{corollary}{Corollary}
\newtheorem{definition}{Definition}
\newtheorem{example}{Example}

\newtheorem{proposition}{Proposition}
\newtheorem{remark}{Remark}

\providecommand{\tabularnewline}{\\}

\setcounter{MaxMatrixCols}{10}
\makeatother

\numberwithin{equation}{section}
\numberwithin{figure}{section}

\begin{document}
	
	\title{Evaluating probabilities without model risk}
	
	\author{Joan del Castillo}
	\author{Pedro Puig}
	
	\email[Corresponding author, Joan]{joan.delcastillo@uab.cat}
	\email[Pedro: ]{pere.puig@uab.cat}
	\address[Joan and Pedro: ]{Departament de Matemàtiques, Universitat Autònoma de Barcelona, Cerdanyola del Vallès, Spain}
	\address[Alternate affiliation Pedro Puig: ]{Centre de Recerca Matem\`atica, Cerdanyola del Vallès, Spain.\\ }

	\keywords{Statistics of extremes, order statistics, Pareto and power law distributions,	financial risk management.}


	\begin{abstract}
		This article presents methods for estimating extreme probabilities, beyond the range of the observations. These methods are model-free and applicable to almost any sample size. They are grounded in order statistics theory and have a wide range of applications, as they simply require the assumption of a finite expectation. Even in cases when a particular risk model exists, the new methods provide clarity, security and simplicity. The methodology is applicable to the behavior of financial markets, and the results may be compared to those provided by extreme value theory.
	\end{abstract}

		\maketitle

\section{Introduction}

The main objective of this work is to determine how to evaluate the
probabilities of a random phenomenon in the right tail of its distribution,
without making assumptions about the model that produced it. In
this direction, an entirely new inequality is introduced that is extremely
well-founded and has a wide range of applications for extrapolating
into the tails.

Non-parametric methods that rely on the distribution of order statistics,
often provide results that are completely free of model risk. However, they
do not provide information beyond an observed sample. This theory
could be said to begin with \cite{Wilks} and is explained in \cite{ecastillo88,David}
and \cite{Gibbons} among many others. From a probabilistic point
of view, Markov's inequality describes the decrease of the complementary
cumulative distribution function in terms of its expected value. A
recent improvement to the inequality starts with the concept of partial
expectation by \cite{Cohen} and is developed by \cite{Ogasawara}.
This provides inequalities that require no additional assumptions
and are much more powerful than the original for high thresholds.

Our job has been to apply these theories to
reality, by identifying and analyzing the properties of an estimator
for the partial expectation. The inequalities (\ref{eq:iB1}) and
(\ref{eq:iBa}) represent the article's primary findings. Its foundations,
which are summarized in Corollary \ref{cor:Cor} and Proposition \ref{prop:Exten},
holds true regardless of model and for virtually any sample size,
as seen in Section \ref{sec:S2}.

The third key element in our work is the extreme value theory
explained in \cite{Coles}, \cite{McNeil}, \cite{Beirlant} and \cite{davison15},
among numerous other authors. Specifically, this theory proves that,
under broadly applicable regularity requirements, the residual distribution
of any heavy-tailed distribution asymptotically resembles a Pareto
distribution over a high enough threshold, see Section \ref{subsec:Asymptotic}.

Section \ref{sec:Num} uses simulations to analyze the differences between
the inequalities (\ref{eq:iB1}) and (\ref{eq:iBa}). Table \ref{tab:T1}
shows that while the second inequality is more stable, it frequently
overestimates the tail function for $a=3$ and $a=5$, whereas for
$a=1$ the bound is closer to the true probability. Table \ref{tab:T2}
shows that the higher the quantile, the safer the use of inequality
(\ref{eq:iB1}). It may  be observed that the results change very little
between the sample sizes (from $10$ to $1000$), but vary in relation
to the number of finite distribution moments.

In Section \ref{sec:DJI}, the results are applied to the
behavior of the Dow Jones Industrial Average equities index, during a period that includes the 2020 Covid stock market crisis. The probability of extreme losses are calculated with the inequality (\ref{eq:iB1}) and with three
estimates of the location Pareto distribution made with the theory
of extreme values, see Figure \ref{fig:LP_fit}. We compared the results of the two methodologies and highlighted the manner in which they are interrelated and complementary, providing a more comprehensive understanding of the process of extrapolation from observed levels to unobserved levels. The paper concludes with discussion.

\section{\protect\label{sec:S2}Several developments in fundamental probability and statistics}

Using the empirical cumulative distribution function, the probability
of finding values larger than the maximum observed in a sample is
zero. It is also illogical to think that this is true, see \cite{Carvalho}.
However, as we shall see, there is a fix for this, which essentially involves presuming that the data analysis can make use of the law of large numbers.

In this section we assume that $X$ is a nonnegative random variable
with cumulative distribution function $F\left(x\right)$. Henceforth
it will be called tail function to $1-F\left(x\right)=Pr\left\{ X>x\right\} $. 

\subsection{Improved Markov's inequality}

In addition to the above assumptions, we now also add that  $X$ has a finite expectation. According to the traditional Markov's inequality the probability
that $X$ is at least $\nu>0$ is at most the expectation of $X$
divided by $\nu$. That is
\begin{equation}
Pr\left\{ X>\nu\right\} \leq E\left(X\right)/\nu.\label{eq:traditional}
\end{equation}
If the random variable $X$ has $k>1$ finite moments, applying Markov's
inequality to $X^{k}$ yields
\begin{equation}
Pr\left\{ X>\nu\right\} =Pr\left\{ X^{k}>\nu^{k}\right\} \leq E\left(X^{k}\right)/\nu^{k},\label{eq:momen}
\end{equation}
which ensures that the tail function decreases as $\nu^{-k}$, or
faster.

Recent improvements to Markov's inequality start with the concept
of partial expectation by \cite{Cohen} and are developed by \cite{Ogasawara}.
One of its results follows from the next definition.
\begin{definition}
The partial expectation of $X$ above $\nu>0$ is defined as $E_{\nu}\left(X\right)=E\left(X1_{\left\{ X>\nu\right\} }\right)$,
where $1_{\left\{ X>\nu\right\} }$ is the indicator of the event
$\left\{ X>\nu\right\} $. 

For clarity, if it is assumed that $X$ has a probability density
function $f\left(x\right)$, so its partial expectation is expressed
as 
\[
E_{\nu}\left(X\right)=E\left(X1_{\left\{ X>\nu\right\} }\right)=\int_{\nu}^{\infty}x\,f(x)\,dx.
\]
This concept is employed, for instance, in the Lorenz curve, \cite{Arnold},
for more details see \cite{Cohen}. 
\end{definition}

\begin{theorem}
\label{markov}If $X$ is a nonnegative random variable and $\nu>0$,
then the probability of $X$ being at least $\nu$ is at most the
partial expectation of X over $\nu$ divided by $\nu$. 
\begin{equation}
\Pr\left\{ X>\nu\right\} \leq E_{\nu}\left(X\right)/\nu.\label{eq:Markov}
\end{equation}
\end{theorem}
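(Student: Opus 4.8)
The plan is to run the classical proof of Markov's inequality but, at the crucial step, to retain the contribution of the indicator inside the expectation instead of bounding it away. First I would record the pointwise inequality $X\,1_{\{X>\nu\}} \ge \nu\,1_{\{X>\nu\}}$ valid on the whole sample space: on the event $\{X>\nu\}$ it is just the defining relation $X>\nu$, and on its complement both sides equal $0$. Nonnegativity of $X$ is used only to guarantee that the left-hand side is itself a nonnegative integrand, so that $E_\nu(X)=E\!\left(X\,1_{\{X>\nu\}}\right)$ is a well-defined element of $[0,\infty]$; if moreover $E(X)<\infty$ then $0\le E_\nu(X)\le E(X)<\infty$.

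Next I would take expectations of both sides and use monotonicity of the expectation to obtain $E\!\left(X\,1_{\{X>\nu\}}\right)\ge \nu\,E\!\left(1_{\{X>\nu\}}\right)$. The left side is the partial expectation $E_\nu(X)$ by definition, while $E\!\left(1_{\{X>\nu\}}\right)=\Pr\{X>\nu\}$ since the expectation of an indicator is the probability of its event; dividing through by $\nu>0$ then yields $\Pr\{X>\nu\}\le E_\nu(X)/\nu$, which is precisely (\ref{eq:Markov}). In the case where $X$ admits a density $f$, the same computation reads concretely $\int_\nu^\infty x\,f(x)\,dx \ge \nu\int_\nu^\infty f(x)\,dx = \nu\,\Pr\{X>\nu\}$, which makes the bound transparent.

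I do not anticipate a genuine obstacle: the argument is a one-line application of monotone integration, so the only points deserving comment are (i) the well-definedness of $E_\nu(X)$ noted above, and (ii) that this bound strictly sharpens the traditional (\ref{eq:traditional}), since $E_\nu(X)=E(X)-E\!\left(X\,1_{\{X\le\nu\}}\right)\le E(X)$, with the gap $E\!\left(X\,1_{\{X\le\nu\}}\right)$ typically large when $\nu$ is a high threshold. I would close by remarking that equality in (\ref{eq:Markov}) holds if and only if $\Pr\{0<X\le\nu\}=0$, i.e. $X$ places no mass on the interval $(0,\nu]$.
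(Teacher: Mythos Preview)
Your proof is correct and is exactly the paper's argument: the paper simply records the pointwise inequality $\nu\,1_{\{X>\nu\}}\le X\,1_{\{X>\nu\}}$ and takes expectations, which is precisely what you do (with some additional, harmless commentary on well-definedness and the density case).

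One small correction to your closing remark: your characterization of the equality case is not right. Equality in (\ref{eq:Markov}) means $E\!\left[(X-\nu)\,1_{\{X>\nu\}}\right]=0$, and since $X-\nu>0$ on $\{X>\nu\}$ this forces $\Pr\{X>\nu\}=0$; conversely if $\Pr\{X>\nu\}=0$ both sides vanish. So equality holds iff $\Pr\{X>\nu\}=0$, not iff $\Pr\{0<X\le\nu\}=0$. (A quick counterexample to your claim: if $X\equiv\nu+1$ then $\Pr\{0<X\le\nu\}=0$ but $E_\nu(X)/\nu=(\nu+1)/\nu>1=\Pr\{X>\nu\}$.) The condition $\Pr\{0<X\le\nu\}=0$ is instead the condition for the improved bound $E_\nu(X)/\nu$ to coincide with the traditional bound $E(X)/\nu$.
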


\begin{proof}
From the inequalities $\nu1_{\left\{ X>\nu\right\} }\leq X1_{\left\{ X>\nu\right\} }$
and taking the expectation of both sides the result follows.
\end{proof}
Moreover, if $\nu>\nu_{0}\geq0$, taking the expectation on inequality
$X1_{\left\{ X>\nu\right\} }\leq X1_{\left\{ X>\nu_{0}\right\} }$
produces

\begin{equation}
\Pr\left\{ X>\nu\right\} \leq E_{\nu}\left(X\right)/\nu\leq E_{\nu_{0}}\left(X\right)/\nu,\label{eq:creixent}
\end{equation}
In particular for $\nu_{0}=0$, (\ref{eq:creixent}) shows that the
inequality (\ref{eq:Markov}) improves the traditional Markov\textquoteright s
inequality (\ref{eq:traditional}). The inequality (\ref{eq:Markov}),
hereinafter called the improved Markov's inequality, requires no additional
assumptions and is hundreds or thousands of times more powerful than
the original for high thresholds (see Examples \ref{exa:expon} and
\ref{exa:HNor}).

The following Proposition and Example \ref{exa:ExPareto} show that
the improved Markov bound and the tail function decrease at the same
order, which is linked to the number of finite distribution moments,
as opposed to the traditional Markov upper limit, which declines as $1/\nu$.
\begin{proposition}
\label{prop:P7}Let $X$ be a nonnegative random variable with cumulative
distribution function $F\left(x\right)$, probability density function
$f\left(x\right)$ and $\nu>0$. Assuming that $X$ has a finite expectation,  the error between the tail function and the improved
Markov bound is  
\begin{equation}
E_{\nu}\left(X\right)/\nu-\Pr\left\{ X>\nu\right\} =\frac{1}{\nu}\int_{\nu}^{\infty}\left(1-F\left(x\right)\right)\,dx\label{eq:error}
\end{equation}
\end{proposition}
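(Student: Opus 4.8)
The plan is to compute the partial expectation $E_\nu(X) = \int_\nu^\infty x\, f(x)\, dx$ directly and show that it equals $\nu\Pr\{X>\nu\}$ plus the correction term $\int_\nu^\infty (1-F(x))\, dx$; dividing by $\nu$ then gives the stated identity. The natural tool is integration by parts, or equivalently the ``layer-cake'' representation $x = \nu + \int_\nu^x dt$ for $x > \nu$.

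First I would write $x\, f(x) = \nu f(x) + \int_\nu^x f(x)\, dt$ for $x>\nu$ and integrate over $x \in (\nu,\infty)$. The first piece yields $\nu \int_\nu^\infty f(x)\, dx = \nu\Pr\{X>\nu\}$. For the second piece I would swap the order of integration (Tonelli, valid since the integrand is nonnegative): $\int_\nu^\infty \int_\nu^x f(x)\, dt\, dx = \int_\nu^\infty \left( \int_t^\infty f(x)\, dx \right) dt = \int_\nu^\infty (1-F(t))\, dt$. Adding the two pieces gives $E_\nu(X) = \nu\Pr\{X>\nu\} + \int_\nu^\infty (1-F(x))\, dx$, and dividing by $\nu$ produces (\ref{eq:error}). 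Alternatively, the same identity follows from integration by parts applied to $\int_\nu^\infty x\, f(x)\, dx = \bigl[-x(1-F(x))\bigr]_\nu^\infty + \int_\nu^\infty (1-F(x))\, dx$.

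The one genuine obstacle is justifying the boundary term at infinity: in the integration-by-parts route one needs $x(1-F(x)) \to 0$ as $x \to \infty$, and in the Tonelli route one needs to know the double integral is finite so that the value is well defined. Both follow from the hypothesis that $X$ has finite expectation, since $0 \le x(1-F(x)) = x\int_x^\infty f(u)\,du \le \int_x^\infty u\, f(u)\, du \to 0$ as $x\to\infty$ (the tail of a convergent integral), and $\int_\nu^\infty (1-F(x))\, dx \le \int_\nu^\infty \int_x^\infty f(u)\,du\,dx \le E(X) < \infty$. I would state this estimate explicitly, as it is the only place the finite-expectation assumption is used. Everything else is a routine change of order of integration.
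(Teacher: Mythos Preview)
Your proposal is correct and essentially matches the paper's proof: the paper uses exactly the integration-by-parts computation you give as your alternative, obtaining $E_{\nu}(X)=\nu(1-F(\nu))+\int_{\nu}^{\infty}(1-F(x))\,dx$ and then dividing by $\nu$. Your Tonelli/layer-cake version is a cosmetic reformulation of the same identity, and your explicit justification that $x(1-F(x))\to 0$ from $E(X)<\infty$ is a detail the paper leaves implicit.
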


\begin{proof}
Computing the integral, using integration by parts,

\begin{multline*}
E_{\nu}\left(X\right)=\int_{\nu}^{\infty}x\,f\left(x\right)\,dx=\left[x\,\left(F\left(x\right)-1\right)\right]_{\nu}^{\infty}-\int_{\nu}^{\infty}\left(F\left(x\right)-1\right)\,dx\\
=\nu\left(1-F\left(\nu\right)\right)+\int_{\nu}^{\infty}\left(1-F\left(x\right)\right)\,dx,
\end{multline*}
then 
\[
E_{\nu}\left(X\right)/\nu=\left(1-F\left(\nu\right)\right)+\frac{1}{\nu}\int_{\nu}^{\infty}\left(1-F\left(x\right)\right)\,dx,
\]
and the result follows.
\end{proof}
In general if $X$ has $k$ finite moments, from (\ref{eq:momen}),
$\left(1-F\left(x\right)\right)$ decreases as $\nu^{-k}$. and $xf\left(x\right)$
is of the same order. Moreover, in (\ref{eq:error}) the integral
in the error term decreases by one less unit, but the error by the
same order as the tail function and the improved Markov bound. This is significant since the partial expectation already shows what the traditional Markov's inequality does with the higher moments. 
This explains why the improved Markov result of Theorem \ref{markov}
is very precise. The above Proposition is illustrated by an example.

The Pareto type I distribution, \cite{Arnold},
also known in physics as power law distribution,
is the two-parameter family of continuous probability density functions

\begin{equation}
f_{pl}\left(x;\alpha,\mu\right)=\frac{\alpha}{\mu}\left(\frac{\mu}{x}\right)^{\alpha+1},\;\left(x\geq\mu\right),\label{eq:PL}
\end{equation}
where the shape parameter $\alpha>0$ is known as the tail index,
and the parameter $\mu>0$ is the minimum value  of the distribution support $\left(\mu,\infty\right)$. 
\begin{example}
\label{exa:ExPareto}Let $X$ be distributed as (\ref{eq:PL}), with
$\mu=1$ and support $\left(1,\infty\right)$.

The moments of the distribution are given by 
\[
E\left(X^{k}\right)=\alpha\int_{1}^{\infty}\frac{x^{k}}{x^{\alpha+1}}\,dx=\alpha\left[\frac{x^{k-\alpha}}{k-\alpha}\right]_{1}^{\infty}=\begin{cases}
\frac{\alpha}{\alpha-k}<\infty & k<\alpha\\
=\infty & k\geq\alpha
\end{cases},
\]
only for $k<\alpha$, the distribution has finite moments. 

Assuming $\alpha>1$, the straightforward integration shows that the partial
expectation is 
\[
E_{\nu}\left(X\right)=\int_{\nu}^{\infty}\frac{\alpha}{x^{\alpha}}\,dx=\frac{\alpha}{\left(\alpha-1\right)\,}\nu^{1-\alpha}.
\]
Then, the improved Markov bound is $E_{\nu}\left(X\right)/\nu=\alpha\left(\alpha-1\right)^{-1}\,\nu^{-\alpha}$,
which decreases as the tail $1-F\left(\nu\right)=\nu^{-\alpha}$.
The difference, the error term, is $\left(\alpha-1\right)^{-1}\,\nu^{-\alpha}$, 
which decreases at the same order as the two terms. It fits the result
of Proposition \ref{prop:P7} .
\end{example}

\subsection{Fundamental inequality}

The improved Markov's inequality, really very accurate, leads to trying
a practical application.

Assume that the phenomenon under consideration may be represented
by $X$, a nonnegative continuous random variable with a finite expectation. A sample from $X$ is a sequence $X_{1},X_{2},...,X_{n}$
of independent random variables identically distributed as $X$
(i.i.d). The values of the sequence rearranged in an increasing order
are denoted by $X_{1,n}\leq X_{2,n}\leq...\leq X_{n,n}$, then $X_{r,n}$
is called the $r$-th order statistic ($r=1,...,n$).

It is assumed here that a sample of size $n$ has been observed. Their
ordered values are represented by $0\leq x_{1,n}\leq x_{2,n}\leq...\leq x_{n,n}$,
and every new event is still represented by $X$. The challenge is
to find a practical bound of the tail function $Pr\left\{ X>\nu\right\} $
even when $\nu>x_{n,n}$, from the improved Markov bound $E_{\nu}\left(X\right)/\nu$.

Since sample mean is a good estimator for the expected value, if
$\nu\leq x_{n,n}$, the partial expectation $E_{\nu}\left(X\right)$,
as the expectation of $X1_{\left\{ X>\nu\right\} }$, may be estimated
by the average of the sample in which values less than $\nu$ are
nullified. If we let $k$ be the number of observations larger than $\nu$,
$\nu\leq x_{n,n}$, $E_{\nu}\left(X\right)$ may be estimated
by the partial mean, $pM_{k}$,
\begin{equation}
pM_{k}=\frac{1}{n}\sum_{j=0}^{k-1}x_{n-j,n}.\label{eq:rMean}
\end{equation}
If $\nu_{0}\leq x_{n,n}<\nu$, the partial expectation $E_{\nu_{0}}\left(X\right)$
may  be estimated from the sample, but $E_{\nu}\left(X\right)$ may not,
and (\ref{eq:creixent}) sets the upper limit
\begin{equation}
\Pr\left\{ X>\nu\right\} \leq E_{\nu_{0}}\left(X\right)/\nu,\label{eq:upper}
\end{equation}
As we are interested in a small upper bound, to calculate $E_{\nu_{0}}\left(X\right)$,
we will choose as large a $\nu_{0}$ as possible provided that it may be estimated. For $x_{n-1,n}\leq\nu_{0}<x_{n,n}$, the partial mean is
constant $pM_{k_{0}}=x_{n,n}/n$, and this is the smallest estimate
of $E_{\nu_{0}}\left(X\right)$ that may be made from the sample.
This leads to the following inequality for $\nu\geq x_{n,n}$

\begin{equation}
Pr\{X>\nu\}\leq\frac{x_{n,n}}{n\,\nu}.\label{eq:iB1}
\end{equation}

If $a>1$, a less demanding inequality is 
\begin{equation}
Pr\{X>\nu\}\leq\frac{ax_{n,n}}{n\,\nu}.\label{eq:iBa}
\end{equation}

An elementary empirical approximation of inequality (\ref{eq:Markov})
is given by inequality (\ref{eq:iB1}). At this point it is once again necessary to consider random $x_{n,n}$ in order to study its features. This
will be done below. The empirical upper bound of the tail
function in (\ref{eq:iB1}) is denoted by
\begin{equation}
eB\left(\nu\right)\equiv eB\left(\nu,n,X_{n,n}\right)=\frac{X_{n,n}}{n\,\nu}.\label{eq:bound}
\end{equation}

\subsection{On the distribution of the maximum}

Assume now that $X$ is an absolutely continuous nonnegative random
variable with probability density function $f\left(x\right)=F'\left(x\right)$
and quantile function $Q\left(p\right)=F^{-1}\left(p\right)$ $(0<p<1)$.
Let $X_{1:n}\leq X_{2,n}\leq...\leq X_{n:n}$ be a sample from $X$ rearranged
in increasing order (unobserved).

The variable $X$ may be converted by probability transform $U=F\left(X\right)$
into a standard uniform distribution $U(0,1)$ with cumulative distribution
function $F_{U}\left(x\right)=x$. The classical non-parametric theory
extends the result to order statistics, $F\left(X_{r,n}\right)=U_{r,n}$,
determining that these have beta distribution, $U_{r,n}\sim Beta\left(r,n-r+1\right)$,
see \cite{ecastillo88}. We may then calculate the moments from those
of the beta distribution
\begin{equation}
E\left[F\left(X_{r,n}\right)\right]=E\left[U_{r,n}\right]=\frac{r}{n+1},\label{eq:Emax}
\end{equation}
In particular this outcome applies to the distribution of the maximum
of the sample, $X_{n,n}$. In this case the result is well known. We include it to make the article self-contained, by using it in the next Corollary.
\begin{proposition}
\label{prop:distribucio}The cumulative distribution function of the
absolutely continuous random variable $X$ evaluated over the maximum
of a sample $X_{n,n}$ does not depend on the distribution itself.
More specifically, for $0<x<1$,
\[
Pr\left\{ F\left(X_{n,n}\right)\leq x\right\} =x^{n}.
\]
\end{proposition}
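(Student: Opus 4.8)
The plan is to reduce everything to the uniform case via the probability integral transform already recalled in the excerpt. First I would set $U_i=F(X_i)$ for $i=1,\dots,n$; since $X$ is absolutely continuous each $U_i$ is uniform on $(0,1)$ with $F_U(x)=x$, and the $U_i$ are mutually independent because the $X_i$ are.

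The next step is the observation that $F$, being a cumulative distribution function, is non-decreasing, so it preserves the order of the sample: from $X_{1,n}\le\cdots\le X_{n,n}$ we get $F(X_{1,n})\le\cdots\le F(X_{n,n})$, and since $\{F(X_{1,n}),\dots,F(X_{n,n})\}=\{U_1,\dots,U_n\}$ as multisets, the largest of them satisfies $F(X_{n,n})=\max\{U_1,\dots,U_n\}=U_{n,n}$. Monotonicity alone suffices here; the possibility of ties among the $X_i$ (or among the $U_i$) has probability zero and does not affect this identity.

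It then remains to compute the distribution of the maximum of $n$ i.i.d.\ uniforms directly. For $0<x<1$,
\[
\Pr\{F(X_{n,n})\le x\}=\Pr\{U_{n,n}\le x\}=\Pr\{U_1\le x,\dots,U_n\le x\}=\prod_{i=1}^{n}\Pr\{U_i\le x\}=x^{n},
\]
where independence is used in the penultimate equality and $F_U(x)=x$ in the last. This is in fact the special case $r=n$ of the identity $U_{r,n}\sim Beta(r,n-r+1)$ mentioned above, but the direct computation keeps the argument self-contained.

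There is no real obstacle; the only point that deserves a word is the first step, namely that $U=F(X)$ is genuinely $U(0,1)$, which is exactly where absolute continuity of $X$ enters (for a general $F$ with atoms the transform is only stochastically comparable to a uniform). Everything else is monotonicity of $F$ together with independence of the sample.
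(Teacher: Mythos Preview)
Your proof is correct and follows essentially the same approach as the paper: reduce to the uniform case via the probability integral transform, identify $F(X_{n,n})$ with $U_{n,n}$, and compute $\Pr\{U_{n,n}\le x\}=x^{n}$ by independence. Your version is simply more explicit about the monotonicity step and about where absolute continuity is used, whereas the paper compresses all of this into a single line.
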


\begin{proof}
Since $F_{U}\left(x\right)=x$, and $U_{n,n}$ is the maximum
of $n$ standard uniform random variables, it follows that  $F_{U_{n,n}}\left(x\right)=F_{U}\left(x\right)^{n}=x^{n}$.
\end{proof}
If we let $X_{n,n}$ be the maximum of a $n$-sample from $X$, and let $X=X_{n+1}$ be
a new event, from (\ref{eq:Emax}) the expected value of the random
variable $Pr\left\{ X>X_{n,n}\right\} =1-F\left(X_{n,n}\right)$ is
$1/(n+1)$. Classical non-parametric theory uses this value as a first
approximation to the variable, see \cite{ecastillo88}.

\begin{equation}
Pr\left\{ X>X_{n,n}\right\} \approx\frac{1}{n+1}.\label{eq:NPin}
\end{equation}

\begin{remark}
The inequality (\ref{eq:iB1}) for $\nu=x_{n,n}$ coincides with 
the classical result (\ref{eq:NPin}), except for factor $(n+1)/n$.
This coincidence, which occurred after two very different explanations,
supports the new inequality which may be applied for all values of
$\nu$, unlike the previous one.
\end{remark}

The next result, which does not require a finite expectation, goes
further in order to justify the inequality (\ref{eq:iBa}).
\begin{corollary}
\label{cor:Cor}The probability that the tail function of an absolutely
continuous nonnegative random variable exceeds the maximum of a sample
in $a/n<1$ is
\[
Pr\left\{ Pr\{X>X_{n,n}\}>a/n\right\} =Pr\left\{ 1-F\left(X_{n,n}\right)>a/n\right\} =\left(1-a/n\right)^{n}\approx e^{-a}
\]
in particular,\textup{ }for $a=5$\textup{
\[
Pr\left\{ 1-F\left(X_{n,n}\right)<5/n\right\} =1-(1-5/n)^{n}\approx0.993,
\]
} and for $a=1$

\textup{
\[
Pr\left\{ 1-F\left(X_{n,n}\right)<1/n\right\} =1-\left(1-1/n\right)^{n}\approx(1-1/e)=0.632,
\]
}
\end{corollary}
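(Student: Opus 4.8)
The plan is to reduce everything to Proposition \ref{prop:distribucio}. Write $Y_n = F(X_{n,n})$. Since $X$ is absolutely continuous, Proposition \ref{prop:distribucio} gives $\Pr\{Y_n \le x\} = x^n$ for $0 < x < 1$, so $Y_n$ has the continuous law $\mathrm{Beta}(n,1)$ and in particular assigns zero probability to every single point. The event in question, $\{\,1 - F(X_{n,n}) > a/n\,\}$, is exactly $\{\,Y_n < 1 - a/n\,\}$, and because $0 < a/n < 1$ the threshold $1 - a/n$ lies in $(0,1)$; using that the law of $Y_n$ is atomless to replace the strict inequality by a weak one, $\Pr\{Y_n < 1 - a/n\} = \Pr\{Y_n \le 1 - a/n\} = (1-a/n)^n$. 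This is the first displayed identity.

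For the asymptotic statement I would invoke the elementary limit $(1 - a/n)^n \to e^{-a}$ as $n \to \infty$ (equivalently, $n\log(1 - a/n) \to -a$), which justifies writing $\approx e^{-a}$; since no rate is claimed, nothing further is needed.

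The two numerical corollaries then follow by taking complements: the event $\{\,1 - F(X_{n,n}) < a/n\,\}$ is, up to a null event, the complement of $\{\,1 - F(X_{n,n}) > a/n\,\}$, so its probability is $1 - (1 - a/n)^n$, which tends to $1 - e^{-a}$. Substituting $a = 5$ gives $1 - e^{-5} \approx 0.993$ and $a = 1$ gives $1 - e^{-1} \approx 0.632$, exactly as stated.

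There is essentially no obstacle here; the only point requiring a word of care is the passage between strict and weak inequalities, which is legitimate precisely because absolute continuity of $X$ forces the distribution of $F(X_{n,n})$ to be atomless. It is worth stressing, as the statement emphasizes, that this argument never uses a finite expectation — it relies only on absolute continuity of $X$, inherited through Proposition \ref{prop:distribucio}.
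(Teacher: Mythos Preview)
Your proof is correct and follows essentially the same approach as the paper: rewrite the event $\{1-F(X_{n,n})>a/n\}$ as $\{F(X_{n,n})<1-a/n\}$ and apply Proposition~\ref{prop:distribucio}. Your version is in fact more careful than the paper's, which silently passes between strict and weak inequalities; your remark that absolute continuity makes the law of $F(X_{n,n})$ atomless is exactly the justification that step needs.
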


\begin{proof}
Since $Pr\left\{ 1-F\left(X_{n,n}\right)>a/n\right\} =Pr\left\{ F\left(X_{n,n}\right)<1-a/n\right\} $,
from Proposition \ref{prop:distribucio}
\[
Pr\left\{ F\left(X_{n,n}\right)<1-a/n\right\} =\left(1-a/n\right)^{n}\approx e^{-a}.
\]
\end{proof}
\begin{remark}
These results are significant because (\ref{eq:iBa}) holds true regardless
of the model and for virtually any sample size (say $n>2a$), since $1-\left(1-a/n\right)^{n}$is
a decreasing sequence up to the limit $1-e^{-a}$. For $n=10$, $1-(1-5/n)^{n}=0.999$, a number that decreases as the sample size increases to $1-e^{-5}=0.99326$,
really close. Note that for $a=1$, the above result ensures that the
inequality (\ref{eq:iB1}) holds for $\nu=x_{n,n}$ in $63.2\%$ of
the cases. For $a=3$, (\ref{eq:iBa}) is guaranteed with with a confidence level of  $95.0\%$ and, for $a=5$, with a confidence level of  $99.3\%$. 
\end{remark}

Generally speaking, greater values also satisfy the inequality (\ref{eq:iBa})
if it is satisfied at a certain point. Let us see a specific situation.
\begin{proposition}
\label{prop:Exten}Let $X$ be a random variable for which inequality
(\ref{eq:iBa}) holds at \textup{$x_{0}$,}

\[
Pr\{X>x_{0}\}\leq\frac{ax_{n,n}}{n\,x_{0}}.
\]
If its tail function is such that $x\left(1-F\left(x\right)\right)$is
a decreasing function for \textup{$x>x_{0}$, }then the inequality
holds for all \textup{$x>x_{0}$.}
\end{proposition}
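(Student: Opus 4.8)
The plan is to reduce the statement to a single monotonicity observation about the auxiliary function $g(x)=x\left(1-F(x)\right)$. First I would rewrite the hypothesis at $x_{0}$ in a more convenient form: multiplying both sides of the assumed inequality $\Pr\{X>x_{0}\}\leq a x_{n,n}/(n x_{0})$ by $x_{0}>0$ and recalling that $\Pr\{X>x_{0}\}=1-F(x_{0})$, one gets
\[
g(x_{0})=x_{0}\left(1-F(x_{0})\right)\leq \frac{a\,x_{n,n}}{n}.
\]
So the whole content of the hypothesis is the bound $g(x_{0})\leq a x_{n,n}/n$.

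Next, for an arbitrary fixed $x>x_{0}$, I would invoke the assumption that $g$ is decreasing on $(x_{0},\infty)$ to conclude $g(x)\leq g(x_{0})\leq a x_{n,n}/n$. Dividing through by $x>0$ and using the identity $g(x)/x=1-F(x)=\Pr\{X>x\}$ yields exactly the claimed bound $\Pr\{X>x\}\leq a x_{n,n}/(n x)$ for all $x>x_{0}$. This is the entire argument: the empirical bound $a x_{n,n}/(n x)$ decays like $1/x$, and the regularity condition says that $x(1-F(x))$ does not increase past $x_{0}$, so once the inequality holds at $x_{0}$ it propagates to the right.

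The only point deserving a word of care is the behaviour at the left endpoint $x_{0}$: the hypothesis asserts monotonicity only on the open set $x>x_{0}$, so a priori one obtains $g(x)\leq \lim_{t\downarrow x_{0}}g(t)$, and one must identify this limit with $g(x_{0})$. Under the standing assumption of the section that $X$ is absolutely continuous, $F$ and hence $g$ are continuous, so the limit equals $g(x_{0})$ and the argument closes. (If one preferred not to use continuity, it would suffice to pick $x_{0}<x'<x$, note $g(x)\leq g(x')$ and let $x'\downarrow x_{0}$ using right-continuity of $1-F$, or simply to phrase the hypothesis as monotonicity on $[x_{0},\infty)$.) I do not expect any genuine obstacle here; the proposition is essentially a one-line monotonicity comparison once the hypothesis has been put in the form $g(x_{0})\leq a x_{n,n}/n$.
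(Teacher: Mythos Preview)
Your proposal is correct and follows essentially the same approach as the paper: the paper defines the ratio $c(x)=a x_{n,n}/\bigl(n\,x(1-F(x))\bigr)$, notes that it is increasing on $x>x_{0}$ because $x(1-F(x))$ is decreasing, and concludes from $c(x_{0})\geq 1$ that $c(x)>1$ for $x>x_{0}$. Your argument via $g(x)=x(1-F(x))$ is the same monotonicity comparison, simply phrased by bounding $g$ by the constant $a x_{n,n}/n$ rather than by tracking the reciprocal ratio; your remark on continuity at $x_{0}$ is a refinement the paper leaves implicit.
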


\begin{proof}
Let $c\left(x\right)=eB\left(x\right)/\left(1-F\left(x\right)\right)=ax_{n,n}/(n x\left(1-F\left(x\right)\right))$,
which is an increasing function for $x>x_{0}$, from the assumption.
Since (\ref{eq:iBa}) holds at $x_{0}$, $c\left(x_{0}\right)\geq1$,
then $c\left(x\right)>1$ for $x>x_{0}$. Hence, $1-F\left(x\right)<eB\left(x\right)$ for all $x>x_{0}$.
\end{proof}
 
Pareto I distribution (\ref{eq:PL}) with $\alpha>1$ is covered by
the above Proposition. We simply examine the standard distribution
of Example \ref{exa:ExPareto}, with tail function $1-F(x)=x^{-\alpha}$,
since the inequality is unaffected by scale changes. If $\alpha>1$,
then $x\left(1-F(x)\right)=x^{-(\alpha-1)}$ is a decreasing function
for all $x$ in the support $\left(1,\infty\right)$. Note that in this case  $X$ has a finite expectation if and only if $x\left(1-F\left(x\right)\right)$ is a decreasing function.

The Proposition \ref{prop:Exten} applies to exponential distribution
for $x$ larger than the expected value, since for the standard case $x\,exp(-x)$ decreases for
$x>1$, see Example \ref{exa:expon}. The Proposition also applies
to a half-normal distribution, since for the standard case with unit
expectation, $x\left(1-F\left(x\right)\right)$is a decreasing function
for $x>0.9423$, see Example \ref{exa:HNor}.

Even so, we are interested in showing that in practice we can also
apply the inequality for $a=1$. The reason is that $a\,eB\left(x\right)$
frequently overprotects by an unnecessary factor $a$, whereas $eB\left(x\right)$
is a bound adjusted to the true probability, as we will discuss in Section
\ref{sec:Num}.

The relationship between the maximum of a $n$-sample and the quantile
$(n-1)/n$ is examined in order to determine when inequality (\ref{eq:iB1})
is satisfied.
\begin{proposition}
\label{prop:P10}For any $n$-sample of the variable $X$ and for
\textup{$q_{1}=Q\left(1-1/n\right)$} , the next three claims are
equivalent 
\end{proposition}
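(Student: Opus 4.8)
The plan is to show that each of the three claims is, after an elementary manipulation, the single inequality $F(x_{n,n})\ge 1-1/n$; the equivalence then follows at once from the definition of the empirical bound $eB$ in (\ref{eq:bound}) together with the standard relation between a distribution function and its (generalised) inverse. Concretely, I read the three claims as: (i) $x_{n,n}\ge q_1$; (ii) $\Pr\{X>x_{n,n}\}\le 1/n$; and (iii) inequality (\ref{eq:iB1}) holds at $\nu=x_{n,n}$, that is $\Pr\{X>x_{n,n}\}\le eB(x_{n,n})$.

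First I would record the arithmetical fact that $eB(x_{n,n})=x_{n,n}/(n\,x_{n,n})=1/n$: the empirical upper bound of the tail function, evaluated at the sample maximum, always equals $1/n$, whatever the sample. Since $\Pr\{X>x_{n,n}\}=1-F(x_{n,n})$, this makes (iii) literally the same statement as $1-F(x_{n,n})\le 1/n$, i.e.\ as (ii), and both are equivalent to $F(x_{n,n})\ge 1-1/n$. Next I would connect this to $q_1=Q(1-1/n)$: because $F$ is a distribution function (hence right-continuous) and $Q$ its generalised inverse, one has the Galois-type equivalence $x\ge Q(p)\iff F(x)\ge p$ for every $p\in(0,1)$; applying it with $p=1-1/n$ gives $x_{n,n}\ge q_1\iff F(x_{n,n})\ge 1-1/n$. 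Chaining the two equivalences yields (i)$\iff$(ii)$\iff$(iii), which is the assertion.

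The only point needing care — and really the only obstacle — is whether the equivalence between $F(x_{n,n})\ge 1-1/n$ and $x_{n,n}\ge q_1$ should carry a strict inequality on one side; this is precisely where absolute continuity of $X$ enters. Since $F$ is continuous we have $F(Q(p))=p$ exactly, so no boundary ambiguity arises and the equivalences are genuine ``iff''s rather than ``iff up to a null set''. I would close by noting the quantitative companion, which is what Section~\ref{sec:Num} and the earlier remark use: by Proposition~\ref{prop:distribucio}, $\Pr\{x_{n,n}\ge q_1\}=\Pr\{F(X_{n,n})\ge 1-1/n\}=1-(1-1/n)^n$, a quantity decreasing to $1-e^{-1}\approx 0.632$, which recovers the $63.2\%$ figure for the validity of (\ref{eq:iB1}) at the sample maximum.
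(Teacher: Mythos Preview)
You have misidentified the third claim. In the paper the three items (listed in an \texttt{enumerate} immediately following the proposition) are: (1) $X_{n,n}\ge q_1$; (2) inequality (\ref{eq:iB1}) holds at $X_{n,n}$, i.e.\ $1-F(X_{n,n})\le 1/n$; and (3) inequality (\ref{eq:iB1}) holds at $q_1$, i.e.\ $1-F(q_1)\le eB(q_1)$. Your (ii) and (iii) are both the paper's claim (2): you have written the same inequality twice, once as $\Pr\{X>x_{n,n}\}\le 1/n$ and once via the observation $eB(x_{n,n})=1/n$. You yourself note that these are ``literally the same statement'', which means you have established only one equivalence, (1)$\Leftrightarrow$(2), not two.

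The missing piece is (1)$\Leftrightarrow$(3). Since $1-F(q_1)=1/n$ exactly (by continuity of $F$) and $eB(q_1)=X_{n,n}/(n\,q_1)$ by definition (\ref{eq:bound}), claim (3) reads $1/n\le X_{n,n}/(n\,q_1)$, which simplifies to $X_{n,n}\ge q_1$, i.e.\ claim (1). This is the content of the third displayed line in the paper's proof. Your argument for (1)$\Leftrightarrow$(2) is correct and matches the paper's first two displayed lines, and your closing remark recovering the $63.2\%$ figure is also fine; only the reading of claim (3) needs to be fixed, after which the same elementary manipulation you already used completes the proof.
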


\begin{enumerate}
\item The maximum of the sample exceeds $q_{1}$, that is $X_{n,n}\geq q_{1}$.
\item The inequality (\ref{eq:iB1}) is satisfied in $X_{n,n}$, $1-F\left(X_{n,n}\right)\leq1/n$.
\item The inequality (\ref{eq:iB1}) is satisfied in $q_{1}$, $\left(1-F\left(q_{1}\right)\right)\leq eB\left(q_{1}\right)$. 
\end{enumerate}
\begin{proof}
The results are deduced from the following equivalences:
\[
X_{n,n}\geq q_{1}\Leftrightarrow F\left(X_{n,n}\right)\geq F\left(q_{1}\right)=1-1/n\Leftrightarrow1-F\left(X_{n,n}\right)\leq1/n
\]

\[
eB\left(X_{n,n}\right)-\left(1-F\left(X_{n,n}\right)\right)=1/n-\left(1-F\left(X_{n,n}\right)\right)\geq0
\]

\[
X_{n,n}\geq q_{1}\Leftrightarrow\frac{X_{n,n}}{n\,q_{1}}-\frac{1}{n}\geq0\Leftrightarrow eB\left(q_{1}\right)-\left(1-F\left(q_{1}\right)\right)\geq0
\]
\end{proof}
Consequently, regardless of the model and for virtually any sample size,
the following three equivalent claims are satisfied in $63.2\%$ of
the cases
\[
X_{n,n}\geq q_{1}\Leftrightarrow1-F\left(q_{1}\right)\leq\frac{X_{n,n}}{n\,q_{1}}\Leftrightarrow1-F\left(X_{n,n}\right)\leq1/n,
\]
and the inequality (\ref{eq:iB1}) is satisfied at $X_{n,n}$ and
at $q_{1}$.

In practice, if the sample being examined may  be assumed to have a
very high maximum, it will likely surpass the median of maximums
and the inequality will be satisfied. Only when randomness yields
a sample with a maximum within $37\%$ of the smallest values (below
the quantile $1/e$), well below the median of the maximums, will
it cease to work, see also Section \ref{sec:DJI}.

\section{\protect\label{subsec:Asymptotic}Asymptotic methods}

Inequalities (\ref{eq:iB1}) and (\ref{eq:iBa}) based on Corollary
\ref{cor:Cor} and Proposition \ref{prop:Exten} may be applied in
almost any circumstance. They provide relevant information and, are very
reliable and easy to use.

By closely examining the above passage, we can further state that
while the Corollary is independent of sample size and variable distribution,
it must be assumed that the sample has been selected appropriately (i.i.d. observations).
Proposition \ref{prop:Exten} needs a condition similar to finite expectation (equivalent in some models), with no need for finite variance. This is as to assuming the law of large numbers.

Additionally, the Proposition requires that the sample be large enough
to ensure that the maximum exceeds the value for which $x\left(1-F\left(x\right)\right)$
is a decreasing function. It has been observed that there are no constraints
on the Pareto I distribution. Simulation ($10^{5}$
samples) has shown us that the exponential distribution needs a sample size $n\geq10$ to ensure with $99\%$ confidence that the maximum exceeds the threshold
from which the function decreases. For a half-normal distribution
a sample size of $n\geq8$ is needed to ensure the same  with $99\%$ confidence.

Applications often require the tough challenge of probability estimate
beyond the observed sample. This suggests extrapolating observed levels
in the direction of unobserved levels. In this regard there are no
serious competing  models to those provided by extreme value theory.

Extreme value theory provides a large class of models that allow extrapolation
into the tails of a distribution, see \cite{davison15}, \cite{Beirlant}
and \cite{Coles}. The theory is divided into two branches. The first examines
the behavior of block maxima and characterizes their normalized limits
as members of the generalized extreme value distribution, which includes
the Gumbell, Weibull, and Fréchet distributions. The second, originated
in the 1970s, makes it possible to deduce from the previous characterization
the asymptotic behavior of the residual distribution over a high threshold
as members of the generalized Pareto distribution. Both branches use
asymptotic arguments and it may
be necessary to extrapolate into unseen levels. Hence, there are still
a number of issues in the applications, even in the univariate case,
see \cite{Gomes}.

We will now consider random variables that have an unbounded right
tail and some finite moments, or all finite moments as exponential
variables. If a distribution has only a finite number of finite moments,
it is said to be heavy-tailed. Extreme value theory guarantees that
asymptotically the exceedances above a large threshold $\nu$ (the
tail) of any heavy-tailed distribution follow a classical Pareto distribution,
under widely applicable regularity conditions, see \cite{Coles} and
\cite{Beirlant}. 

This methodology partially avoids model risk, but only asymptotically,
because convergence is slow at times and it faces a threshold selection
problem. The new approach, with inequalities (\ref{eq:iB1}) and
(\ref{eq:iBa}), is a complement to the extreme value theory and reinforces or makes it possible to discuss the results. 

\subsection{\protect\label{subsec:Pareto}Location Pareto distribution}

The Pareto type II distribution, \cite{Arnold}, 
has probability density function

\begin{equation}
f_{lpd}\left(x;\alpha,\mu,\delta\right)=\frac{\alpha}{\mu+\delta}\left(\frac{\mu+\delta}{x+\delta}\right)^{\alpha+1},\;\left(x\geq\mu\right),\label{eq:LPD}
\end{equation}
where the shape parameter $\alpha>0$ is the tail index, the parameter
$\mu$ is the minimum value of
the distribution support $\left(\mu,\infty\right)$ and $\delta>-\mu$
is the change in the unit of measurement's origin. For non-negative
values of $\mu\geq0$, we shall refer to the model (\ref{eq:LPD})
location Pareto distribution, which  will be represented by $LPD\left(\alpha,\mu,\delta\right)$ (negative
values of $\mu$ are also allowed in Pareto type II). This model includes
the power law (\ref{eq:PL}) for $\delta=0$. Additionally, for a
higher threshold $\mu$ the two models perform similarly, to distinguish between these two models see \cite{Castillo23}.

The LPD also includes the classical Pareto distribution
for $\mu=0$. Its probability density function often adopts an alternative
parameterization.
\begin{equation}
f_{pd}\left(x;\alpha,\delta\right)=\frac{\alpha}{\delta}\left(\frac{\delta}{x+\delta}\right)^{\alpha+1}=\frac{1}{\beta}\left(1+\frac{x\xi}{\beta}\right)^{-\left(1+\xi\right)/\xi},\;\left(x\geq0\right),\label{eq:PD}
\end{equation}
where $\xi=1/\alpha$ is the extreme value index and $\beta=\delta/\alpha$
is scale.

The Pareto distribution (\ref{eq:PD}), with the parameterization
$\left(\xi,\beta\right)$, generalizes to upper bounded distributions
with the previous formula (\ref{eq:PD}) for $\xi<0$, and to the
exponential distribution as a limiting case when $\xi$ tends to $0$.
Then, the joint model is called the generalized Pareto distribution.
The case of interest for us is $\xi>0$ (heavy tails), Since the larger
integer less than $\alpha=1/\xi$ indicates the number of finite moments
in the model, this parameter will be used often.

Fitting models (\ref{eq:LPD}) or (\ref{eq:PL}) to a sample entails
the problem of finding a reasonable (unambiguous and automatic) threshold
within which the theory may be applied. This is a threshold selection problem
rather than an estimation problem, which is often performed visually with several graphs. The techniques proposed by \cite{Clauset}
present an automatic method for choosing the threshold $\mu$ that provides
the best asymptotic approximation to the power law distribution. Alternatively,
\cite{Castillo23} and \cite{Castillo19} provide an automatic threshold
selection algorithm (\emph{thrselect} function in the R package \emph{ercv})
for location Pareto distribution (\ref{eq:LPD}), using a residual
variation coefficient technique.

\section{\protect\label{sec:Num}Numerical evidence}

Asymptotic arguments and mathematical proofs underpin the conclusions
in Section \ref{sec:S2}. However, when practical applications
are considered, doubts always arise. The validity of the inequalities
(\ref{eq:iB1}) and (\ref{eq:iBa}) requires the existence of a finite
moment in the model that produced the data. How do we know if this
is accurate?

According to  experts in the physics of complex systems most identified
power laws in nature have exponents such that the mean is well-defined
but the variance is not $\left(1<\alpha<2\right)$, see \cite{Clauset}.
Several authors claim that financial returns have finite skewness
but infinite kurtosis $\left(3<\alpha<4\right)$. In any case, what
is done, under various formats, is to apply the extreme value theory,
as we will do.

If estimating the LPD distribution gives a value $\alpha>1$,
from the maximum of the studied sample the behavior of the data must
be close to the Pareto distribution and, if $\alpha$ is large enough,
near the exponential distribution. Therefore, we will use these distributions
to see certain aspects of the previous inequalities. In addition, the
half-normal distribution is included, which takes into account lighter
tails than the exponential. 

We have simulated these distributions to observe how the number of
finite moments affects inequalities (\ref{eq:iB1}) and (\ref{eq:iBa}),
and how they behave beyond $X_{n,n}$. Tables \ref{tab:T1} and
\ref{tab:T2} show the outcomes.
\begin{example}
\label{exa:expon}The cumulative distribution function and the probability
density function of the standard exponential distribution, $X$, are
given by 
\[
F\left(x\right)=1-e^{-x},\;f\left(x\right)=e^{-x},\;x\geq0.
\]
The partial expectation is easy to calculate using integration
by parts 
\[
E_{\nu}\left(X\right)=\int_{\nu}^{\infty}x\,e^{-x}\,dx=e^{-\nu}\left(1+\nu\right),
\]
and if $\nu=0$, the expectation is $E\left(X\right)=1$.
\end{example}

The $0.99$ and $0.999$ quantiles for $X$ are $4.605$ and $6.908$.
The three terms of the inequalities (\ref{eq:creixent}) for $\nu=6.908$
and $\nu_{0}=0$ are: the tail function $1.00\,10^{-3}$, the improved\emph{
}Markov bound $1.14\,10^{-3}$ and the traditional Markov bound $0.145$.
Simply by looking at the orders of magnitude, the difference is obvious, as it is around one thousand times better.
\begin{example}
\label{exa:HNor}If $X$ is normally distributed with zero mean, $N\left(0,\sigma^{2}\right)$,
then its absolute value $\left|X\right|$ follows a \emph{half-normal}
distribution, supported on the interval $[0,\infty)$, with the cumulative
distribution function
\[
F_{H}\left(x;\sigma\right)=2\Phi\left(x/\sigma\right)-1,\;x\geq0.
\]
where $\Phi\left(x\right)$ is the cumulative distribution function
of the standard normal distribution. The probability density function
is
\[
f_{H}\left(x;\sigma\right)=\frac{2}{\sqrt{2\pi}\sigma}exp\left(-\frac{x^{2}}{2\sigma^{2}}\right),\;x\geq0.
\]
The partial expectation of the half-normal distribution is given by
\[
E_{\nu}\left(\left|X\right|\right)=\frac{\sqrt{2}}{\sqrt{\pi}\sigma}\int_{\nu}^{\infty}x\,exp\left(-\frac{x^{2}}{2\sigma^{2}}\right)\,dx=\sigma\sqrt{2/\pi}exp\left(-\frac{\nu^{2}}{2\sigma^{2}}\right),
\]
then in particular $E\left(\left|X\right|\right)=\sigma\sqrt{2/\pi}$.

Then, a half-normal distribution $Z$ where $\sigma=\sqrt{\pi/2}$
has unit expectation, $E\left(Z\right)=1$, and its partial expectation
is 
\[
E_{\nu}\left(Z\right)=exp\left(-\nu^{2}/\pi\right),
\]
\end{example}

The $0.99$ and $0.999$ quantiles for $Z$ are $3.228$ and $4.124$.
The three terms of the inequalities (\ref{eq:creixent}) for $Z$
and $\nu=4.124$ and $\nu_{0}=0$ are: the tail function $1.00\,10^{-3}$,
the improved\emph{ }Markov bound $1.08\,10^{-3}$ and the traditional
Markov bound $0.242$. The new inequality is thousands times better.

The lower bound of the improved Markov bound decreases at the same rate
as the tail function, as demonstrated by Proposition \ref{prop:P7}
and Example \ref{exa:ExPareto} in Section \ref{sec:S2}. As opposed
to the traditional Markov upper limit, which declines as $1/\nu$.

\begin{table}
\caption{\protect\label{tab:T1}Tail function slightly overfitted by the empirical
upper bound (\ref{eq:bound}), $eB\left(\nu\right)$. }
\resizebox{\columnwidth}{!}{%
\begin{tabular}{|c|c|c|c|c|c|c|c|c|c|c|}
\hline 
\multicolumn{11}{|c|}{Simulation of half-normal distribution (expectation $1$, samples
$10^{5}$)}\tabularnewline
\hline 
\hline 
samp. size & $(n-1)/n$ & \multicolumn{5}{c|}{eB empirical quantiles ($n$ times)} & sample & \multicolumn{3}{c|}{probability exceeding tail}\tabularnewline
\hline 
$n$ & $F$-quantile & min & 0.01 & median & 0.99 & max & mean & $a=1$ & $a=3$ & $a=5$\tabularnewline
\hline 
1000 & 4.125 & 0.782 & 0.863 & 1.032 & 1.342 & 1.839 & 1.045 & 0.633 & 0.952 & 0.994\tabularnewline
\hline 
100 & 3.229 & 0.623 & 0.780 & 1.050 & 1.506 & 2.113 & 1.067 & 0.637 & 0.954 & 0.995\tabularnewline
\hline 
10 & 2.062 & 0.237 & 0.550 & 1.114 & 1.999 & 3.236 & 1.144 & 0.652 & 0.973 & 1.000\tabularnewline
\hline 
\multicolumn{11}{|c|}{Simulation of exponential distribution (expectation $1$, samples
$10^{5}$)}\tabularnewline
\hline 
samp. size & $(n-1)/n$ & \multicolumn{5}{c|}{eB empirical quantiles ($n$ times)} & sample & \multicolumn{3}{c|}{probability exceeding tail}\tabularnewline
\hline 
$n$ & $F$-quantile & min & 0.01 & median & 0.99 & max & mean & $a=1$ & $a=3$ & $a=5$\tabularnewline
\hline 
1000 & 6.908 & 0.622 & 0.779 & 1.054 & 1.666 & 2.689 & 1.085 & 0.636 & 0.952 & 0.994\tabularnewline
\hline 
100 & 4.606 & 0.397 & 0.676 & 1.080 & 2.001 & 3.636 & 1.126 & 0.632 & 0.953 & 0.995\tabularnewline
\hline 
10 & 2.303 & 0.186 & 0.434 & 1.176 & 2.998 & 6.193 & 1.274 & 0.653 & 0.972 & 0.999\tabularnewline
\hline 
\multicolumn{11}{|c|}{Simulation of Pareto distribution ($\alpha=6$, samples $10^{5}$)}\tabularnewline
\hline 
samp. size & $(n-1)/n$ & \multicolumn{5}{c|}{eB empirical quantiles ($n$ times)} & sample & \multicolumn{3}{c|}{probability exceeding tail}\tabularnewline
\hline 
$n$ & $F$-quantile & min & 0.01 & median & 0.99 & max & mean & $a=1$ & $a=3$ & $a=5$\tabularnewline
\hline 
1000 & 3.163 & 0.675 & 0.777 & 1.064 & 2.144 & 7.048 & 1.130 & 0.634 & 0.951 & 0.994\tabularnewline
\hline 
100 & 2.155 & 0.668 & 0.778 & 1.064 & 2.171 & 8.389 & 1.130 & 0.634 & 0.952 & 0.994\tabularnewline
\hline 
10 & 1.468 & 0.725 & 0.805 & 1.068 & 2.151 & 6.433 & 1.136 & 0.651 & 0.973 & 0.999\tabularnewline
\hline 
\multicolumn{11}{|c|}{Simulation of Pareto distribution ($\alpha=4$, samples $10^{5}$)}\tabularnewline
\hline 
samp. size & $(n-1)/n$ & \multicolumn{5}{c|}{eB empirical quantiles ($n$ times)} & sample & \multicolumn{3}{c|}{probability exceeding tail}\tabularnewline
\hline 
$n$ & $F$-quantile & min & 0.01 & median & 0.99 & max & mean & $a=1$ & $a=3$ & $a=5$\tabularnewline
\hline 
1000 & 5.624 & 0.514 & 0.684 & 1.097 & 3.159 & 16.001 & 1.227 & 0.633 & 0.949 & 0.994\tabularnewline
\hline 
100 & 3.163 & 0.526 & 0.688 & 1.098 & 3.174 & 14.698 & 1.227 & 0.635 & 0.954 & 0.995\tabularnewline
\hline 
10 & 1.779 & 0.622 & 0.722 & 1.106 & 3.154 & 27.162 & 1.237 & 0.652 & 0.972 & 1.000\tabularnewline
\hline 
\multicolumn{11}{|c|}{Simulation of Pareto distribution ($\alpha=2$, samples $10^{5}$)}\tabularnewline
\hline 
samp. size & $(n-1)/n$ & \multicolumn{5}{c|}{eB empirical quantiles ($n$ times)} & sample & \multicolumn{3}{c|}{probability exceeding tail}\tabularnewline
\hline 
$n$ & $F$-quantile & min & 0.01 & median & 0.99 & max & mean & $a=1$ & $a=3$ & $a=5$\tabularnewline
\hline 
1000 & 31.623 & 0.28 & 0.467 & 1.201 & 9.920 & 939.63 & 1.793 & 0.633 & 0.951 & 0.994\tabularnewline
\hline 
100 & 10.000 & 0.269 & 0.472 & 1.201 & 9.971 & 441.51 & 1.761 & 0.634 & 0.952 & 0.994\tabularnewline
\hline 
10 & 3.163 & 0.37 & 0.521 & 1.221 & 9.872 & 398.84 & 1.793 & 0.651 & 0.972 & 0.999\tabularnewline
\hline 
\end{tabular}
}
\end{table}

Table \ref{tab:T1} shows the behavior of the empirical upper bound
(\ref{eq:bound}), $eB(\nu)$, in samples of size $10$, $100$ and
$1000$ in the half-normal, exponential and Pareto I distributions
for the values of the parameter $\alpha=2$ (infinite variance), $\alpha=4$
(infinite kurtosis) and $\alpha=6$. We used standard versions of the
distributions because the inequalities are not affected by changes
of scale.

For each combination of distribution function $F$ and sample size
$n$ we calculated quantile $q_{1}$ so that $1-F(q_{1})=1/n$. At
this point $eB(q_{1})$ must be an upper bound of the tail function
 in approximately $63\%$ of the cases, see ($3$) in Proposition \ref{prop:P10}
and Corollary \ref{cor:Cor}. Equivalently, $n\,eB(q_{1})=X_{n,n}/q_{1}$
must be an upper bound of $1$.

For each distribution we simulated $10^{5}$ samples of size $n$ and
calculated the maximum $X_{n,n}$, thus obtaining a sample of size
$10^{5}$ for $n\,eB(q_{1})$. For this sample we calculated the mean
and the quantiles $0$, $0.1$, $0.5$, $0.99$ and $1$. We see in
the median and mean that $eB(q_{1})$ slightly overestimates the tail
function. With a sample size of $100$, the median for the Pareto
distribution ($\alpha=6$) is $1.064$, overestimating $1$ by $6.4\%$.
In all the tables these are average values and the result is very
similar in all sample sizes and similar in all distributions. Only
Pareto ($\alpha=2$) increases to 1.20.

The probabilities, obtained through simulation, of exceeding the tail
function for $a$ equal to $1$, $3$ and $5$, coincide with those
of Corollary \ref{cor:Cor}. Table \ref{tab:T1} shows that, if
the inequality (\ref{eq:iBa}) is safer than (\ref{eq:iB1}) for $a=3$
and $a=5$, the median and mean of the samples of
$a\,eB(q_{1})$ are 3 or 5 times larger. For the Pareto distribution
($\alpha=6$), with sample size $n=100$, they are $3.192$ and $5.320$,
thus usually overestimating $1$ by extremely large quantities.
This prompts us to propose $a=1$.

Looking at the $0.01$ quantile of the $eB$ samples makes the above
proposal even more plausible. It shows that while (\ref{eq:bound})
underestimates in $37\%$ of cases, only $1\%$ of the cases (excluding
Pareto with $\alpha=2$ and some samples with $n=10$) are below $0.70$
of the true value. In other words, if by chance underestimates, it does so with quite realistic values.
\begin{table}

\caption{\protect\label{tab:T2}Probability of exceeding the tail value simulating
Pareto I distributions with $10^{5}$ replicates for $\mu=1$, different
parameter $\alpha$ and sample sizes $n$.}
\resizebox{\columnwidth}{!}{%
\begin{tabular}{|c|c|c|c|c|c|c|c|c|c|c|c|}
\hline 
\multicolumn{4}{|c|}{$\alpha=4$, $n=1000$} & \multicolumn{4}{c|}{$\alpha=3$, $n=1000$} & \multicolumn{4}{c|}{$\alpha=2$, $n=1000$}\tabularnewline
\hline 
\hline 
Probability & 0.999 & 0.9995 & 0.9998 & Probability & 0.999 & 0.9995 & 0.9998 & Probability & 0.999 & 0.9995 & 0.9998\tabularnewline
\hline 
Quantile & 5.624 & 6.688 & 8.409 & Quantile & 10.000 & 12.600 & 17.100 & Quantile & 31.623 & 44.722 & 70.711\tabularnewline
\hline 
Tail (n times) & 1.000 & 0.500 & 0.200 & Tail (n times) & 1.000 & 0.500 & 0.200 & Tail (n times) & 1.000 & 0.500 & 0.200\tabularnewline
\hline 
Median & 1.096 & 0.922 & 0.733 & Median & 1.131 & 0.898 & 0.661 & Median & 1.197 & 0.846 & 0.535\tabularnewline
\hline 
Pro. exceeds & 0.633 & 1.000 & 1.000 & Pro. exceeds & 0.632 & 0.983 & 1.000 & Pro. exceeds & 0.633 & 0.866 & 0.994\tabularnewline
\hline 
\multicolumn{4}{|c|}{$\alpha=4$, $n=100$} & \multicolumn{4}{c|}{$\alpha=3$, $n=100$} & \multicolumn{4}{c|}{$\alpha=2$, $n=100$}\tabularnewline
\hline 
Probability & 0.99 & 0.995 & 0.998 & Probability & 0.99 & 0.995 & 0.998 & Probability & 0.99 & 0.995 & 0.998\tabularnewline
\hline 
Quantile & 3.163 & 3.761 & 4.729 & Quantile & 4.642 & 5.849 & 7.938 & Quantile & 10 & 14.143 & 22.361\tabularnewline
\hline 
Tail (n times) & 1.000 & 0.500 & 0.200 & Tail (n times) & 1.000 & 0.500 & 0.200 & Tail (n times) & 1.000 & 0.500 & 0.200\tabularnewline
\hline 
Median & 1.097 & 0.923 & 0.734 & Median & 1.131 & 0.897 & 0.661 & Median & 1.203 & 0.851 & 0.538\tabularnewline
\hline 
Pro. exceeds & 0.633 & 1.000 & 1.000 & Pro. exceeds & 0.634 & 0.983 & 1.000 & Pro. exceeds & 0.635 & 0.868 & 0.995\tabularnewline
\hline 
\multicolumn{4}{|c|}{$\alpha=4$, $n=10$} & \multicolumn{4}{c|}{$\alpha=3$, $n=10$} & \multicolumn{4}{c|}{$\alpha=2$, $n=10$}\tabularnewline
\hline 
Probability & 0.9 & 0.95 & 0.98 & Probability & 0.9 & 0.95 & 0.98 & Probability & 0.9 & 0.95 & 0.98\tabularnewline
\hline 
Quantile & 1.779 & 2.115 & 2.66 & Quantile & 2.155 & 2.715 & 3.685 & Quantile & 3.163 & 4.473 & 7.072\tabularnewline
\hline 
Tail (n times) & 1.000 & 0.500 & 0.200 & Tail (n times) & 1.000 & 0.500 & 0.200 & Tail (n times) & 1.000 & 0.500 & 0.200\tabularnewline
\hline 
Median & 1.105 & 0.929 & 0.739 & Median & 1.142 & 0.907 & 0.668 & Median & 1.223 & 0.865 & 0.547\tabularnewline
\hline 
Pro. exceeds & 0.652 & 1.000 & 1.000 & exceeds & 0.652 & 0.995 & 1.000 & Pro. exceeds & 0.651 & 0.893 & 1.000\tabularnewline
\hline 
\end{tabular}
}
\end{table}

Table \ref{tab:T2} shows the behavior of the empirical upper bound,
$eB(\nu)$, (\ref{eq:bound}) beyond the $(n-1)/n$ quartile ($\nu>q_{1}$)
in samples of size $10$, $100$ and $1000$ for the Pareto I distribution,
while simulating $10^{5}$ replicates with $\mu=1$ and parameter $\alpha=2$
(infinite variance), $\alpha=3$ (infinite skewness) and $\alpha=4$
(infinite kurtosis). Since the tail function of these distributions
decreases as $\nu^{-\alpha}$ with $\alpha>1$ and the empirical upper
bound decreases as $\nu^{-1}$, the probabilities that $eB(\nu)$
exceeds $1-F\left(\nu\right)$ in the inequality (\ref{eq:iB1})
should increase with $\nu>q_{1}$.

We have selected distributions with a gradual tail decline in order
to assess this phenomenon, and quantiles $q_{1}$, $q_{0.5}$
and $q_{0.2}$ corresponding to probabilities $(n-1)/n$ ,$(n-0.5)/n$
and $(n-0.2)/n$ , which  for $n=100$ result in $0.99$, $0.995$ and
$0.998$. At these quantiles, $n$ times the tail function, $n(1-F)$,
is $1$, $0.5$ and $0.2$, as shown in Table \ref{tab:T2}.

The median of $eB(\nu)$ samples and the probability of exceeding
the tail value are determined in each of the three quantiles following
the simulation of $10^{5}$ samples of each distribution and sample
size. It may be observed that the results change very little between
the sample sizes, but vary with respect to the parameter $\alpha$. 

The probability that $eB(\nu)$ exceeds the tail value in $\alpha=4$,
which is approximately $63\%$ at $q_{1}$, increases to $100\%$
at $q_{0.5}$ and $q_{0.2}$. The probability that it exceeds in $\alpha=3$,
is approximately $98\%$ at $q_{0.5}$ and $100\%$ at $q_{0.2}$.
The probability that it exceeds in $\alpha=2$, is approximately $87\%$
at $q_{0.5}$ and $99\%$ at $q_{0.2}$. Additionally, Table \ref{tab:T2}
shows that the higher the quantile, the safer the use of inequality
(\ref{eq:iB1}), as the Median always decreases more slowly than the
value of the tail, provided that the decline is slower than the Tail.

This fact is closely related to the number of finite moments, as shown
in (\ref{eq:momen}), which determines the order of decrease of the
tail function. The drop is quick enough in distributions with four
or more finite moments, meaning that the inequality may be applied $100\%$
of the times across quantiles $q_{0.5}$ and $q_{0.2}$. These quantiles are connected to the maxima of samples of sizes $2n$ and $5n$, as $q_{1}$ is connected to the maxima of samples of size $n$ in Proposition \ref{prop:P10}.

\section{\protect\label{sec:DJI}A real-world data set}

\begin{figure}
\caption{\protect\label{fig:F1_DJI}Dow Jones Industrial Average daily prices
between $1$ January $2015$ and $31$ December $2022$}
\includegraphics[scale=0.6]{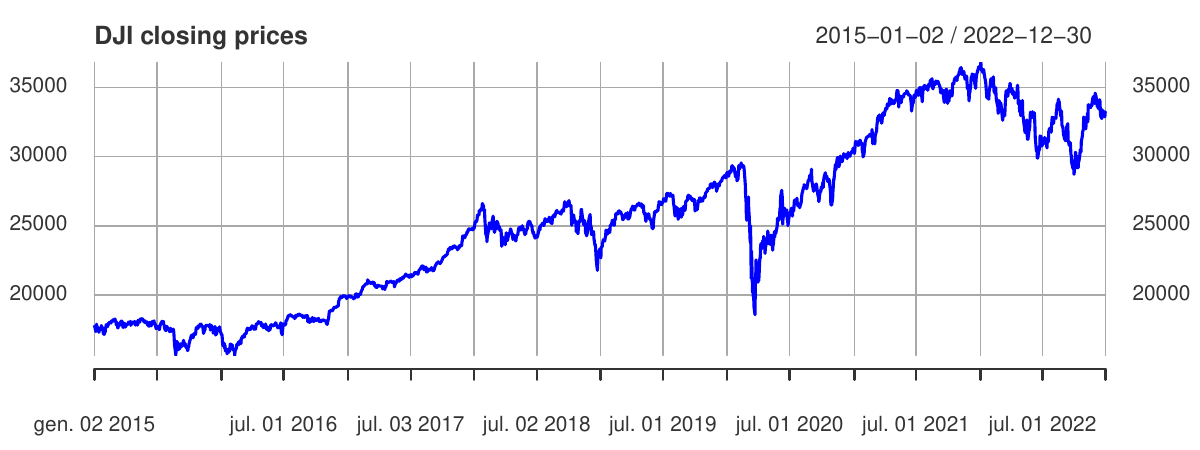}
\end{figure}
The field of finance, insurance, and risk theory has witnessed significant
advancements in the extreme value theory, see \cite{McNeil} and \cite{Longin}.
Even so, alternative models are still being sought to avoid estimates that are 
unacceptable in practice. \cite{Mandelbrot} used the Dow Jones Industrial
Average (DJI), which is one of the first equity indices used to represent
the entire US stock market, to demonstrate the limitations of the Gaussian
random variables, particularly with regard to risk measurement, and he proposed
Pareto distribution-based models, such as those in Section \ref{subsec:Pareto}.
Avoiding model risk is critical when there are multiple competing
models in a real-world situation. In these cases inequality (\ref{eq:iB1})
is relevant, as it serves to set the maximum probability of the possible
misfortunes.

Just a few days after Coronavirus was declared a pandemic by the World
Health Organization, major stock markets lost more than $15\%$ of
their market capitalization. This study begins with an easy query
about the stock market crisis of $2020$: What is the probability
of an investor's portfolio losing more than $15\%$ or $20\%$ of
its value in a single day?

We addressed this issue by analyzing the behavior of the DJI daily prices
between $1$ January $2015$ and $31$ December $2022$ (eight years),
a period which included the $2020$ stock market crisis, which we downloaded with the
R package quantmod, \cite{Jeffrey}. Figure \ref{fig:F1_DJI} shows
the daily closing prices, denoted by $S_{t}$. Throughout the paper,
we used the daily returns of the previous data set, described as

\[
R_{t}=100\,log\left(S_{t}/S_{t-1}\right).
\]
The fundamental model for financial assets postulates that returns
are distributed as independent Gaussian random variables and, equivalently,
the logarithm of asset prices follows a Brownian motion. However,
this assumption is far from perfect from an empirical perspective.
The estimate $\hat{\mu}=0.031$, $\hat{\sigma}=1.187$ is obtained
by fitting the $n=2013$ DJI returns into a Gaussian distribution, $N\left(\mu,\sigma^{2}\right)$.
The probability of losing more than $10\%$ of the DJI price using
the Gaussian model for our data set is $Pr\left\{ R<-10\right\} =1.46\cdot10^{-17}$.
This is once every $2.74\cdot10^{14}$ years, assuming that there are
$250$ business days in a year. This figure is significantly greater
than the Earth's estimated age of $4,543$ billion years. However, during this period, there were
losses of more than $10\%$ on $12$ and $16$ March
$2020$ ($10.52$ and $13.84$ respectively),
see Table \ref{tab:DJIloss}. In $8$ years, $2$
incidents were recorded, which refutes the model's applicability.

The DJI dataset contains $2013$ daily returns, of which $1081$ are
larger than zero, $930$ less than zero and $2$ equal to zero. The
set of $930$ returns less than zero with the sign changed will be
denoted by negative returns. We will start by applying the inequality
(\ref{eq:iB1}) to the DJI negative returns ($N$), which are a sample
of a positive random variable. The conditional probability over a
given threshold $\nu>x_{n,n}=13.84$ is $Pr\left\{ N>\nu\left|N\right.\right\} \leq13.84/(930 \nu)$,
hence, the unconditional probability is
\begin{equation}
Pr\left\{ -R>\nu\right\} =Pr\left\{ N>\nu\left|N\right.\right\} Pr\left\{ N\right\} \leq\frac{13.84}{930 \nu}\frac{930}{2013}=\frac{13.84}{2013 \nu},\label{eq:unconditional}
\end{equation}
where the probability of a negative return is estimated by $Pr\left\{ N\right\} =Pr\left\{ R<0\right\} =930/2013$.
Therefore, the probabilities over thresholds $13.84$, $15$ and $20$,
are $4.97\cdot10^{-4}$, $4.58\cdot10^{-4}$ and $3.44\cdot10^{-4}$,
respectively. The return periods measured in years, which corresponds
to the inverse of the probability divided by $250$, over thresholds
$13.84$, $15$ and $20$ are $8.05$, $8.73$ and $11.63$, see Table
\ref{tab:NP_tau}. 

Figure \ref{fig:LP_fit} shows the empirical tail function (the step
function that jumps down by $1/n$ at each of the $n$ data points)
for daily DJI losses over threshold 4.5 and the empirical upper bound
$eB\left(\nu\right)$ from the estimation of the $q_{1}$ quantile,
$x_{n-1,n}=10.52$, to $20$. This curve is connected to the sample
maximum $x_{n,n}$ and surpasses the three approximations of the LPD
models which we will discuss below.

\begin{table}
\caption{\protect\label{tab:DJIloss}Dow Jones Industrial losses on a daily,
monthly, and annual basis (2015-2022).}
\resizebox{\columnwidth}{!}{%
\begin{tabular}{|c|c|c|c|c|c|c|c|c|}
\hline 
\multicolumn{9}{|c|}{Largest percentage losses of DJI on a daily, monthly, and annual basis}\tabularnewline
\hline 
\hline 
\multicolumn{3}{|c|}{Largest daily losses} & \multicolumn{3}{c|}{Largest monthly losses} & \multicolumn{3}{c|}{Largest yearly losses}\tabularnewline
\hline 
Rank & Date & Change & Rank & Date & Change & Rank & Date & Change\tabularnewline
\hline 
1 & 2020-03-16 & 13.842 & 1 & 2020-03-31 & 13.842 & 1 & 2020-12-31 & 13.842\tabularnewline
\hline 
2 & 2020-03-12 & 10.523 & 2 & 2020-06-30 & 7.148 & 2 & 2018-12-31 & 4.714\tabularnewline
\hline 
3 & 2020-03-09 & 8.106 & 3 & 2018-02-28 & 4.714 & 3 & 2022-12-30 & 4.021\tabularnewline
\hline 
4 & 2020-06-11 & 7.148 & 4 & 2020-04-30 & 4.544 & 4 & 2015-12-31 & 3.640\tabularnewline
\hline 
5 & 2020-03-18 & 6.510 & 5 & 2020-02-28 & 4.518 & 5 & 2016-12-30 & 3.447\tabularnewline
\hline 
6 & 2020-03-11 & 6.034 & 6 & 2022-09-30 & 4.021 & 6 & 2019-12-31 & 3.093\tabularnewline
\hline 
7 & 2018-02-05 & 4.714 & 7 & 2015-08-31 & 3.640 & 7 & 2021-12-31 & 2.560\tabularnewline
\hline 
8 & 2020-03-20 & 4.653 & 8 & 2022-05-31 & 3.631 & 8 & 2017-12-29 & 1.793\tabularnewline
\hline 
\end{tabular}
}
\end{table}

By comparing the maximums of the $96$ months considered, Table \ref{tab:DJIloss} demonstrates that the maximum DJI
monthly loss, 13.842, is a very significant maximum that clearly exceeds
the median of maximums. Comparing the 8 annual maximums reinforces
this idea. This might be sufficient to validate the inequality-based
analysis (\ref{eq:iB1}).

Three models of location Pareto distribution (\ref{eq:LPD}) have
been adjusted to the DJI negative returns, using extreme value theory.
In one case, following the technique proposed by Clauset, the minimum
threshold selected was $\mu=1.75$. The residual variation coefficient
technique was applied in the remaining two cases, see
\cite{Castillo19}. By default, the \emph{thrselect} function (R
package {\emph{ercv}}) selects the lowest threshold (closest to
$0$) with a $p$-value greater than $0.10$, in this case the {minimum
of the} sample. Here we also propose a variation of the algorithm's
last phase, which involves selecting the threshold with the best
fit in terms of the biggest $p$-value, which in this case $\mu=1.4$.

Based on the three selected thresholds we estimated the other two
parameters $\left(\alpha,\beta\right)$ for each LPD model considered.
These values are found in the last three columns of Table \ref{tab:LPD}.
The estimated $\alpha$ parameter suggests four finite moments for
the LPD-1 model, three finite moments (infinite kurtosis) for the
LPD-2 model, and two finite moments (infinite skewness) for the LPD-3
model. 

The computed parameters for negative returns, $N$, using the peak-over-threshold
method, give us the conditional probabilities of returns less than
zero. Then, with a threshold of $\nu>0$, the unconditional probability
of daily returns are calculated using the equation (\ref{eq:unconditional}).
\begin{figure}
\caption{\protect\label{fig:LP_fit}Empirical tail function and three approximations
by LPD model, with the parameters given in Table \ref{tab:LPD}, and
empirical upper bound eB.}
\includegraphics[scale=0.65]{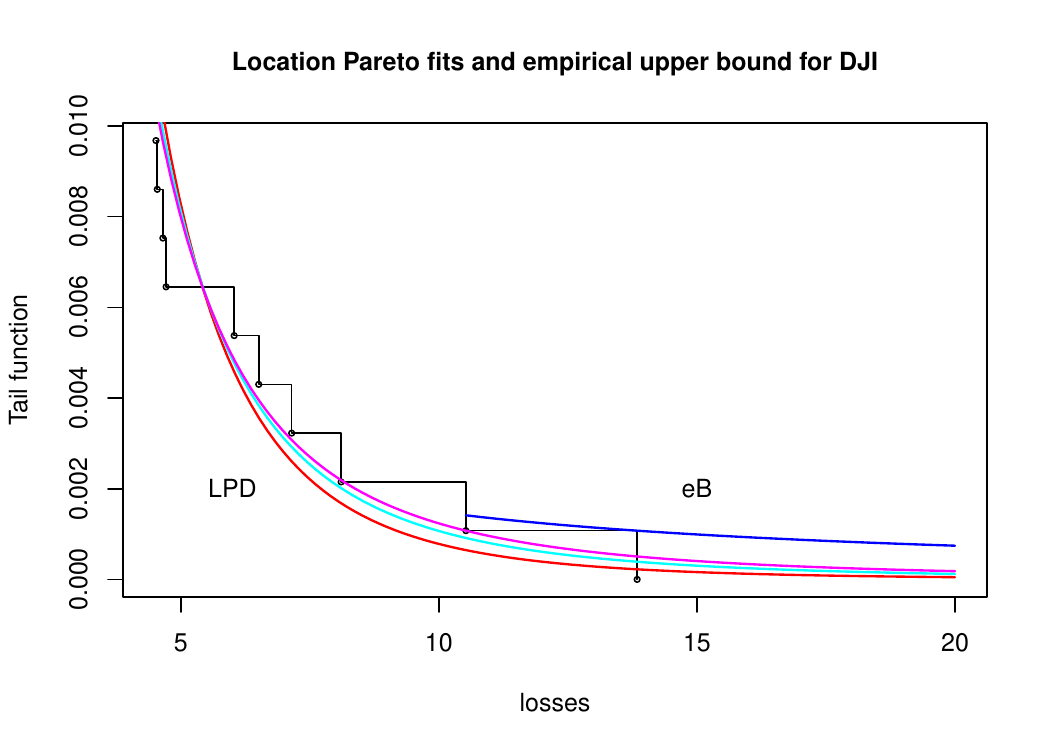}
\end{figure}

\begin{table}
\caption{\protect\label{tab:NP_tau}The estimated probabilities for Dow Jones
Industrial losses by empirical upper bound method.}
\begin{tabular}{|c|c|c|c|}
\hline 
size = 2013 & \multicolumn{3}{c|}{Empirical upper bound}\tabularnewline
\hline 
Threshold & 13.84 & 15 & 20\tabularnewline
\hline 
Probability & 4.97·$10^{-04}$ & 4.58·$10^{-04}$ & 3.44·$10^{-04}$\tabularnewline
\hline 
Return period & 8.05 & 8.73 & 11.63\tabularnewline
\hline 
\end{tabular}
\end{table}

Table \ref{tab:LPD} shows the estimated probabilities for losses
above thresholds $5$, $10$, $13.84$, $15$ and $20$ for the three
LPD models, according to previously selected threshold parameters
$0.000$, $1.400$ and $1.750$, corresponding to subsamples of size
$930$, $154$ and $104$ respectively. To enable comparisons with
empirically observed losses, Table \ref{tab:LPD} contains the
expected value of the number of losses above the thresholds over an
$8$-year period, as well as the return period in years. The comparisons
show that the higher the threshold (the smaller the subsample size)
the closer the estimates are to the observed values.
\begin{table}

\caption{\protect\label{tab:LPD}The estimated probabilities for Dow Jones
Industrial losses by location Pareto distribution
(LPD).}
\begin{tabular}{|c|c|c|c|c|c|c|c|c|}
\hline 
\multicolumn{9}{|c|}{Location Pareto distributions}\tabularnewline
\hline 
\hline 
\multicolumn{6}{|c|}{Loss probabilities greater than the threshold} & \multicolumn{3}{c|}{Parameters}\tabularnewline
\hline 
Threshold & 5 & 10 & 13.84 & 15 & 20 & alpha & beta & mu\tabularnewline
\hline 
LPD-1 & 0.0038 & 0.0004 & 0.0001 & 7.48E-05 & 2.28E-05 & 4.825 & 0.610 & 0.000\tabularnewline
\hline 
LPD-2 & 0.0037 & 0.0005 & 0.0002 & 1.40E-04 & 5.60E-05 & 3.389 & 0.740 & 1.400\tabularnewline
\hline 
LPD-3 & 0.0037 & 0.0006 & 0.0002 & 1.87E-04 & 8.44E-05 & 2.831 & 0.745 & 1.750\tabularnewline
\hline 
Empirical & 0.0030 & 0.0010 & 0.0005 & 0 & 0 &  &  & \tabularnewline
\hline 
\multicolumn{6}{|c|}{Expected value  of losses for a new period of 8 years} & \multicolumn{3}{c|}{Parameters}\tabularnewline
\hline 
Threshold & 5 & 10 & 13.84 & 15 & 20 & alpha & beta & mu\tabularnewline
\hline 
LPD-1 & 7.69 & 0.73 & 0.21 & 0.15 & 0.05 & 4.825 & 0.610 & 0.000\tabularnewline
\hline 
LPD-2 & 7.54 & 0.99 & 0.36 & 0.28 & 0.11 & 3.389 & 0.740 & 1.400\tabularnewline
\hline 
LPD-3 & 7.42 & 1.15 & 0.47 & 0.38 & 0.17 & 2.831 & 0.745 & 1.750\tabularnewline
\hline 
Empirical & 6 & 2 & 1 & 0 & 0 &  &  & \tabularnewline
\hline 
\multicolumn{6}{|c|}{Return period in years} & \multicolumn{3}{c|}{Parameters}\tabularnewline
\hline 
Threshold & 5 & 10 & 13.84 & 15 & 20 & alpha & beta & mu\tabularnewline
\hline 
LPD-1 & 1.05 & 11.06 & 38.75 & 53.50 & 175.21 & 4.825 & 0.610 & 0.000\tabularnewline
\hline 
LPD-2 & 1.07 & 8.10 & 22.17 & 28.56 & 71.38 & 3.389 & 0.740 & 1.400\tabularnewline
\hline 
LPD-3 & 1.08 & 7.00 & 17.10 & 21.36 & 47.42 & 2.831 & 0.745 & 1.750\tabularnewline
\hline 
Empirical & 1.34 & 4.03 & 8.05 &  &  &  &  & \tabularnewline
\hline 
\end{tabular}

\end{table}

Figure \ref{fig:LP_fit} shows the three approximations of the LPD
models with the parameters given in Table \ref{tab:LPD} and the empirical
upper bound $eB\left(\nu\right)$ from (\ref{eq:iB1}), for daily
DJI losses over threshold 4.5. The empirical tail function is a stair
steps function which fall at the daily DJI losses (see Table \ref{tab:DJIloss}).
There are only ten jumps that exceed this cutoff, six of which are
above 5.

The LPD-3 model exhibits a very good fit to the tail function above
threshold 5. Given that the estimate is based on $104$ observations,
a sufficiently large sample size, this model is the one we suggest
for extrapolation. LPD-3 is the closest to the empirical upper bound
$eB\left(\nu\right)$ of the three models.

Adopting a model with little data, on the other hand, is easier than
accepting one with a large amount of data. As a result, the more data with
which we can accept a model, the more evidence we have that it is
valid. In this sense, the LPD-1 model estimated with $930$ observations
is  the most plausible overall (we can reasonably assume four finite
moments to model the data). It should be observed that the fit to the data reverses
at threshold 5 as opposed to the more tail-focused LPD-3 model.

In the estimated Pareto models, LPD-2 retains an intermediate position,
and the outcomes of the three models are quite comparable. This verifies
the process, which holds up well with a sizable data set and in which 
the threshold selection has little impact.

Of the three estimates, the empirical upper limit $eB$ produces
findings that are most similar to the LPD-3 model; this could result
in this model being chosen. The return period of this model is extrapolated
to be $47.42$ years at the $\nu=20$ threshold. Given that the DJI
experienced a loss of $22.6$ on Black Monday in October $1987$,
which occurred $37$ years ago, this seems quite likely.

\section*{\protect\label{sec:Disc}Discussion }

Classical non-parametric methods, which are based on the distribution
of order statistics and yield model-free results, do not provide information
beyond the maximum observed in a sample. Markov's inequality shows
how more information may be obtained by considering distributions with a specific number of finite moments. An improved version of Markov's
inequality emphasizes that it must be applied from the maximum of
a sample using the partial expectation, \cite{Cohen}.

In actuality, partial expectation cannot be calculated; however,
it may be estimated by $X_{n,n}/n$, where $X_{n,n}$ is the maximum
of an $n$-sample. This is the origin of the empirical upper bound
(\ref{eq:bound}) and inequalities (\ref{eq:iB1}) and (\ref{eq:iBa}).
Our goal has been to examine the distribution and application of this
statistic, by simply assuming a finite expectation, without introducing
any model uncertainty.

The inequality (\ref{eq:iB1}) appears to be the best upper bound
in the worst case scenario based on the numerical data presented in
Section \ref{sec:Num}. At the maximum, in mean and median, the empirical
upper bound $eB(X_{n,n})$ only slightly overestimates the tail function.
And where it underestimates, it does so with quite realistic values. Table
\ref{tab:T2} shows that the higher the quantile, the safer the use
of inequality.

Additionally, given a dataset, it is possible to confirm that the inequality (\ref{eq:iB1})
overestimates by a complementary examination. For instance, in Section
\ref{sec:DJI}, the estimated values of quantile $q_{1}$ with the
three LPD models estimated, whose parameters appear in Table \ref{tab:LPD},
are $9.361$, $10.154$ and $10.805$. All three values are below
the maximum of DJI daily losses $13.842$. This fact, according to
Proposition \ref{prop:P10}, guarantees that (\ref{eq:iB1}) exceeds
the tail function.

In situations when model risk is crucial, this innovative method offers
safety, clarity, and simplicity. It acts as a guide for the worst-case
situation that might occur.

\section*{acknowledgments}
This work was funded by the Grant/Award Number: PID2022-137414OB-I00; Severo Ochoa and María de Maeztu Program for Centers and Units of 
Excellence in R\&D, Grant/Award Number: CEX2020-001084-M.

\bibliographystyle{plainnat}
\bibliography{Bib_ProNoRisk}

\end{document}